\documentclass[conference,a4paper]{IEEEtran}
\usepackage{cite}
\usepackage{amsmath,amssymb,amsfonts,amsthm}
\usepackage{graphicx}
\usepackage{textcomp}
\usepackage{booktabs}
\usepackage{url}
\usepackage{eso-pic}
\AddToShipoutPictureFG*{\AtPageLowerLeft{\put(\LenToUnit{1.6cm},\LenToUnit{0.9cm}){\parbox[b]{\dimexpr\paperwidth-3.2cm\relax}{\centering\footnotesize\itshape This work has been submitted to the IEEE for possible publication. Copyright may be transferred without notice, after which this version may no longer be accessible.}}}}

\newcommand{\herm}{^{\mathsf{H}}}
\newcommand{\vect}[1]{\mathbf{#1}}
\theoremstyle{plain}
\newtheorem{proposition}{Proposition}

\begin{document}

\title{Placing the Beamspace Window on a Hopping Carrier: Bin Drift, Exact Capture and the Cost of Rounding}

\author{
\IEEEauthorblockN{Moris S. Mukwayi\IEEEauthorrefmark{1}, Michael A. van Wyk\IEEEauthorrefmark{1}, Andre M. McDonald\IEEEauthorrefmark{2}}
\IEEEauthorblockA{\IEEEauthorrefmark{1}School of Electrical and Information Engineering, University of the Witwatersrand, Johannesburg 2000, South Africa}
\IEEEauthorblockA{\IEEEauthorrefmark{2}Defence and Security Cluster, Council for Scientific and Industrial Research, Pretoria 0184, South Africa}
\IEEEauthorblockA{Contact: moris.qed@gmail.com (corresponding), anton.vanwyk@wits.ac.za, amcdonald@csir.co.za}
}

\maketitle

\begin{abstract}
\hyphenpenalty=10000 \exhyphenpenalty=10000
A frequency-agile radar with a large digital array moves the spatial frequency of its look direction across the beams of a spatial FFT every time it changes carrier, so a beamspace adaptive processor must decide where to put its window at each hop. Because the beamspace transform is orthonormal, the output of the processor depends only on the span of the window, and a hop changes nothing but where the window sits. Centring the window exactly on the fractional bin of the look direction keeps the look vector fully inside it at every carrier, for the price of one phase ramp before the FFT. On a 32 by 32 array over 4 to 8~GHz the look bin drifts by 1.6 to 6.9 bins for look directions from u=0.2 to 60 degrees. Rounding to the nearest bin costs up to 1.34, 0.74 and 0.50~dB of look-direction capture for 3 by 3, 5 by 5 and 7 by 7 windows. In a three-jammer scenario with 41 carriers, a window frozen at the lowest carrier loses 39 to 60~dB at the worst carrier for a look direction of u=0.55, the exact window beats nearest-bin placement by 0.31 to 0.65~dB there, and the advantage reverses for a nine-beam window near broadside, where nearest-bin was 0.61~dB better.
\end{abstract}

\begin{IEEEkeywords}
frequency-agile radar, beamspace processing, adaptive beamforming, phased array radar,
anti-jamming
\end{IEEEkeywords}

\section{Introduction}

Frequency agility is a standard electronic protection measure: changing the carrier from dwell
to dwell denies a jammer a fixed band to concentrate on \cite{zhu2024}. On a large digital array the price is
paid in adaptive processing. Adapting over all $N$ elements by sample matrix inversion takes roughly $2N$ independent snapshots to land within 3~dB of the optimum \cite{rmb1974}, at a cost that grows as $N^3$, which is why reduced-dimension processing, in which a fixed transform compresses the array into a handful of channels and adaptation happens in that small space
\cite{vanveen1988,vantrees2002,somasundaram2015}, matters for large arrays and is still being
developed for the limited training support of airborne radar \cite{song2022,cao2024}. For massive digital radar arrays, adapting only a window of spatial-FFT beams around each target has been shown to approach full-dimensional MVDR at far lower complexity \cite{noroozi2025}, and the approach has been extended to a tiled 1024-element array \cite{noroozi2025b}. That work channelises the wideband signal into subbands, centres a small window on the FFT bin of each target, and studies the trade between window size and FFT zero padding. Windowed beamspace
processing has also been developed for massive multiuser MIMO reception
\cite{han2024,cebeci2024}. A hopping radar changes its carrier by a large fraction of the band from one dwell to the next, so the look direction moves across several FFT bins, and where to put the window becomes the central question.

The carrier dependence of the look bin is the beam squint, or spatial-wideband, effect of phased
arrays, reviewed for broadband mobile communication in \cite{abdalrazak2023}. Squint also moves
the nulls that an array places on interferers \cite{javed2023}, which bears on interference
suppression after a hop. Rounding the window to the FFT grid is the spatial counterpart of the
scalloping loss of a DFT bin \cite{harris1978}. For a uniform linear array and a window on the
nearest bin, \cite{cebeci2025} bound the energy that the window captures from a source off the
grid, independently of the array size, and show that zero padding can lower the SINR when
interference dominates; their wideband MIMO-OFDM study models the carrier dependence of the spatial
frequency and uses one window per subcarrier. Adaptive beamforming against jammers continues to be developed
for wideband, distributed and frequency-diverse arrays \cite{ma2024,han2023,nysaeter2022}. As far as we can tell, none of this work states where to place a fixed-size beamspace window when
the carrier of a radar hops, which is the question here.

The same difficulty is known from wideband direction finding, where covariance matrices at
different frequencies must be aligned before they are combined, and where the focusing
transforms that do it depend on preliminary direction estimates
\cite{wangkaveh1985,hungkaveh1988}. Here no estimate is needed, because the look direction is
known exactly and the geometry fixes where its bin lies at each carrier. The question is what
to do with that knowledge. This paper answers it for a beamspace MVDR window on a uniform
rectangular array. The invariance of Proposition~1 and the shift-theorem implementation are standard; what the paper adds is the placement question under hopping, the exact capture loss of a rounded window on a rectangular array, and the measurement of the placement rules against jammers. Section~II shows that the output of the processor depends only on the span
of the window and quantifies the drift of the look bin. Section~III evaluates the cost of rounding the window to the FFT grid, and of zero padding, for a rectangular array. Section~IV measures the placement rules against three jammers and over random jammer laydowns, and Section~V discusses what the results do and do not show.

\section{Window Position Under Hopping}

\subsection{Array and bin drift}
We consider a $32\times32$ uniform rectangular array ($N_x=N_y=32$, $N=1024$) whose elements are $d=18.75$~mm apart, which is half a wavelength at 8~GHz. A plane wave arriving from direction cosines $(u,v)$ at carrier $f$ gives element $(m,n)$ the phase factor
\begin{equation}
a_{mn}(u,v;f)=\exp\!\left(j\,\frac{2\pi f d}{c}\,(mu+nv)\right),
\label{eq:steer}
\end{equation}
$m,n=0,\dots,31$. A spatial FFT places this vector at the fractional bin
\begin{equation}
p^\star(f)=\frac{u_sN_xdf}{c},\qquad q^\star(f)=\frac{v_sN_ydf}{c},
\label{eq:bin}
\end{equation}
which grows linearly with the carrier. Over a band from $f_{\min}$ to $f_{\max}$ the look bin
drifts by $\Delta p=u_sN_xd\,(f_{\max}-f_{\min})/c$, which is $8.0\,u_s$ bins for this array
over 4 to 8~GHz (Table~\ref{tab:drift}). At the edge of a $\pm60^\circ$ sector the look bin
moves by almost seven bins, more than the half-width of any window that is small enough to be
worth using.

\begin{table}[!t]
\caption{Look-bin drift over 4 to 8~\textnormal{GHz}, $32\times32$ array (bins).}
\label{tab:drift}
\centering
\footnotesize
\begin{tabular}{@{}lcccc@{}}
\toprule
$u_s$ & 0.20 & 0.55 & 0.85 & 0.866 ($60^\circ$) \\
\midrule
$p^\star$ at 4~GHz & 1.60 & 4.40 & 6.80 & 6.93 \\
$p^\star$ at 8~GHz & 3.20 & 8.81 & 13.61 & 13.87 \\
Drift $\Delta p$ & 1.60 & 4.40 & 6.80 & 6.93 \\
\bottomrule
\end{tabular}
\end{table}

\subsection{An orthonormal window depends only on where it sits}
A $W\times W$ window with real centre $(p_c,q_c)$ has the columns
\begin{equation}
[\vect{T}]_{(m,n),(k,l)}=\frac{1}{\sqrt{N}}\,e^{\,j2\pi\left[m(p_c+k)/N_x+n(q_c+l)/N_y\right]},
\label{eq:T}
\end{equation}
$k,l=-(W-1)/2,\dots,(W-1)/2$, $B=W^2$. The columns are orthonormal for every real centre,
because two columns differ by an integer number of bins in each axis. Let
$\vect{R}$ be the interference-plus-noise covariance, $\vect{a}$ the look vector, and
$\vect{R}_b=\vect{T}\herm\vect{R}\vect{T}$, $\vect{a}_b=\vect{T}\herm\vect{a}$. The beamspace
MVDR processor adapts $\vect{w}_b=\vect{R}_b^{-1}\vect{a}_b/(\vect{a}_b\herm\vect{R}_b^{-1}\vect{a}_b)$.

\begin{proposition}[Invariance]
Let $\vect{T}'=\vect{T}\vect{G}$ with $\vect{G}$ unitary. Then the output SINR of the processor
with the true covariance, and the output SINR of sample matrix inversion with the same training
snapshots and diagonal loading $\sigma_{\mathrm{DL}}^2\vect{I}$, are the same for $\vect{T}$ and
$\vect{T}'$.
\end{proposition}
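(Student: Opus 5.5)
The plan is to show that the element-space weight $\vect{w}=\vect{T}\vect{w}_b$ the beamspace processor applies to the array is unchanged when $\vect{T}$ is replaced by $\vect{T}'=\vect{T}\vect{G}$. The output SINR is a function of this element-space weight and of the true $\vect{R}$ and $\vect{a}$ alone. So once the effective weight is shown to be the same, both SINR claims follow at once. The argument is purely algebraic: it uses only that $\vect{G}$ is unitary, and does not need the particular form of $\vect{T}$ in (\ref{eq:T}).

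First I write the primed beamspace quantities:
\begin{equation*}
\vect{R}_b'=\vect{G}\herm\vect{R}_b\vect{G},\qquad \vect{a}_b'=\vect{G}\herm\vect{a}_b .
\end{equation*}
Since $\vect{G}$ is unitary, $(\vect{R}_b')^{-1}=\vect{G}\herm\vect{R}_b^{-1}\vect{G}$. Substituting gives
\begin{equation*}
(\vect{R}_b')^{-1}\vect{a}_b'=\vect{G}\herm\vect{R}_b^{-1}\vect{a}_b
\end{equation*}
and
\begin{equation*}
\vect{a}_b'^{\mathsf{H}}(\vect{R}_b')^{-1}\vect{a}_b'=\vect{a}_b\herm\vect{R}_b^{-1}\vect{a}_b .
\end{equation*}
Hence $\vect{w}_b'=\vect{G}\herm\vect{w}_b$, and the element-space weight is
\begin{equation*}
\vect{T}'\vect{w}_b'=\vect{T}\vect{G}\vect{G}\herm\vect{w}_b=\vect{T}\vect{w}_b .
\end{equation*}
The output SINR is
\begin{equation*}
\frac{\sigma_s^2|\vect{w}\herm\vect{a}|^2}{\vect{w}\herm\vect{R}\vect{w}},
\end{equation*}
which is identical for the two windows.

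For sample matrix inversion, the beamspace snapshots become $\vect{x}_b'(k)=\vect{G}\herm\vect{x}_b(k)$. The sample covariance therefore transforms as
\begin{equation*}
\hat{\vect{R}}_b'=\vect{G}\herm\hat{\vect{R}}_b\vect{G}.
\end{equation*}
The only point needing care is diagonal loading, and it is the one place where unitarity, rather than mere invertibility, of $\vect{G}$ is essential. Because $\vect{G}\herm\vect{G}=\vect{I}$,
\begin{equation*}
\hat{\vect{R}}_b'+\sigma_{\mathrm{DL}}^2\vect{I}=\vect{G}\herm(\hat{\vect{R}}_b+\sigma_{\mathrm{DL}}^2\vect{I})\vect{G}.
\end{equation*}
The loaded matrix thus transforms exactly like $\vect{R}_b$ did. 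The same computation as above then gives the same element-space SMI weight, and evaluating its SINR against the true $\vect{R}$ gives the same value.

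I expect no real obstacle. The only things to verify are that the loaded matrix is invertible, which holds for $\sigma_{\mathrm{DL}}^2>0$ or whenever $\hat{\vect{R}}_b$ is nonsingular, and that the loading commutes with the unitary similarity. I would also remark that the result says the processor depends on the window only through its column span, because any two orthonormal bases of the same span differ by a unitary $\vect{G}$.
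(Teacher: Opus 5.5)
Your proof is correct and follows the same route as the paper's: transform $\vect{R}_b$, $\vect{a}_b$ and the training snapshots by $\vect{G}$, note that the loading $\sigma_{\mathrm{DL}}^2\vect{I}$ is unchanged by the unitary similarity, deduce $\vect{w}_b'=\vect{G}\herm\vect{w}_b$, and conclude that the element-space weight $\vect{T}'\vect{w}_b'=\vect{T}\vect{w}_b$, and hence the SINR, is unchanged. You write out the inverse and normalisation steps that the paper leaves implicit, and your remark that unitarity, rather than mere invertibility, is needed only for the loading step is accurate.
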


\begin{IEEEproof}
$\vect{R}_b'=\vect{G}\herm\vect{R}_b\vect{G}$, $\vect{a}_b'=\vect{G}\herm\vect{a}_b$, and every
training snapshot transforms by $\vect{G}\herm$. A diagonal loading matrix is unchanged,
$\vect{G}\herm(\sigma_{\mathrm{DL}}^2\vect{I})\vect{G}=\sigma_{\mathrm{DL}}^2\vect{I}$. The
adapted weight becomes $\vect{G}\herm\vect{w}_b$, and the effective element-space weight
$\vect{T}'\vect{G}\herm\vect{w}_b=\vect{T}\vect{w}_b$ is unchanged.
\end{IEEEproof}

A hop therefore changes the processor only through the span of the window, that is, through
its centre $(p_c,q_c)$: the carrier moves the look vector to a new bin by \eqref{eq:bin}, and
the question is where to put $(p_c,q_c)$. Four rules are compared:
\emph{fractional}, $(p_c,q_c)=(p^\star,q^\star)$; \emph{nearest-bin}, the integers nearest to
$(p^\star,q^\star)$, which is what a fixed FFT grid provides; \emph{frozen}, the nearest bin
at the lowest carrier, held for every hop; and a \emph{zero-padded} grid with spacing $1/P$ bins
and the nearest padded bin.

With the fractional rule, $\vect{a}$ is exactly $\sqrt{N}$ times column $(0,0)$, so
$\vect{a}_b=\sqrt{N}\,\vect{e}_0$ at every carrier: the look vector is captured completely,
whatever the hop. The rule is implemented by the shift theorem: multiply the snapshot by the
phase ramp $\exp[-j2\pi(mp_c/N_x+nq_c/N_y)]$, which separates into $N_x+N_y$ exponentials per
hop, take the unchanged FFT, and keep the central $W\times W$ bins. The ramp adds $N=1024$
complex multiplications per snapshot to an FFT that needs about $(N/2)\log_2N=5120$.

\section{The Cost of Rounding}

With the window centred on an integer bin and the look bin offset $(\delta_x,\delta_y)=(p^\star-p_c,q^\star-q_c)$,
the share of the look vector's energy that the window captures is
\begin{equation}
\eta=\eta_x(\delta_x)\,\eta_x(\delta_y),\qquad
\eta_x(\delta)=\sum_{k=-(W-1)/2}^{(W-1)/2}|D_{N_x}(\delta-k)|^2,
\label{eq:eta}
\end{equation}
with $D_M(x)=\sin(\pi x)/[M\sin(\pi x/M)]$, and the capture loss is $-10\log_{10}\eta$. In one dimension $\eta_x$ is the energy capture of a
nearest-bin window studied in \cite{cebeci2025}, which bounds it below, independently of $N_x$, by
replacing $|D_{N_x}|$ with a sinc function; the product form and the exact value at $N_x=32$ are what
this section adds for the rectangular array. For the
fractional rule $\delta=0$, $D_M(-k)=0$ for integer $k\ne0$, and $\eta=1$ exactly. Evaluating
\eqref{eq:eta} on the $32\times32$ array, and checking it against the explicit projection of
the 1024-element look vector onto the window (they agree to six digits), gives the worst case,
at an offset of half a bin in both axes, and the mean over offsets uniform in
$[-\tfrac12,\tfrac12]^2$ (Table~\ref{tab:round}). Rounding costs up to 1.34~dB for a $3\times3$
window and 0.50~dB for a $7\times7$ window before any interference is present, and the loss
follows a sawtooth in the carrier whose period is the carrier change that moves the look bin by one
bin (Figure~\ref{fig:drift}). Equation~\eqref{eq:eta} holds for any offset, so it also gives the capture loss of an error in the look direction: half a bin is a direction-cosine error of 0.031 at 8~GHz and 0.0625 at 4~GHz.

\begin{table}[!t]
\caption{Loss of look-direction capture (\textnormal{dB}) from rounding the window to the FFT grid, and from a zero-padded grid with the same $W\times W$ window.}
\label{tab:round}
\centering
\footnotesize
\setlength{\tabcolsep}{3.5pt}
\begin{tabular}{@{}ccccc@{}}
\toprule
 & \multicolumn{2}{c}{Nearest bin ($P=1$)} & \multicolumn{2}{c}{Zero padded, worst case} \\
\cmidrule(lr){2-3}\cmidrule(l){4-5}
$W$ & worst & mean & $P=2$ & $P=4$ \\
\midrule
3 & 1.345 & 0.694 & 0.014 & $<0.001$ \\
5 & 0.739 & 0.393 & $<0.001$ & $<0.001$ \\
7 & 0.504 & 0.271 & $<0.001$ & $<0.001$ \\
\bottomrule
\end{tabular}
\end{table}

\begin{figure}[!t]
\centering
\includegraphics[width=\columnwidth]{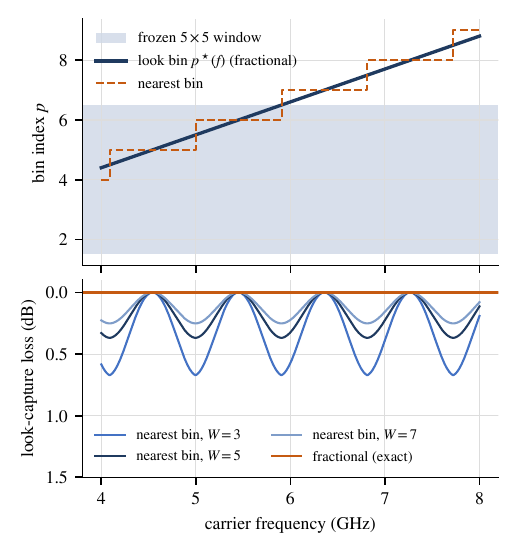}
\caption{Top: look bin of $u_s=0.55$ against carrier, with the nearest bin and the span of a $5\times5$ window frozen at 4~GHz; the look direction leaves that window at 5.91~GHz. Bottom: look-capture loss of the nearest-bin window; the fractional window has none.}
\label{fig:drift}
\end{figure}

Zero padding by $P$, which \cite{noroozi2025} studies as a design parameter and \cite{cebeci2025} find can lower the SINR when interference dominates, shrinks the offset to at most $1/(2P)$ bins and reduces the loss in Table~\ref{tab:round} to 0.014~dB or less for $P\ge2$, but it is not the same processor. The FFT grows to $P^2N$
points, about 4.8 times the cost of the base FFT for $P=2$ against 20\% for the ramp, and a
$W\times W$ window on the padded grid spans only $W/P$ native beams, so it sees less of the
interference around the look direction. Keeping the angular extent needs $P^2B$ beams, and the adaptive stage, whose weight update takes on the order of $B^3/3+KB^2$ operations, grows by a factor approaching $P^6$ for large $B$. The loss in Table~\ref{tab:round} is a property of the look
vector alone; with interference present, the subspace loss also depends on where the jammers sit
relative to the window, which is what the simulation below measures.

\section{Placement Rules Against Jamming}

\subsection{Setup}
The scenario is jamming only: three noise jammers at 30~dB per element, at $(u,v)=(0.675,0)$,
$(0.283,0.05)$ and $(0.600,0.35)$, isotropic elements, 0~dB signal-to-noise ratio per element,
and no clutter. The look direction is $(u_s,0)$ with $u_s=0.20$, 0.55 and 0.85. The carrier takes 41 evenly spaced values from 4 to 8~GHz, 500 dwells each, and every dwell is scored for the three placement rules and for windows of 9, 25 and 49 beams. The training support is
$K=2B$ snapshots (18, 50 and 98) with diagonal loading at the noise power. The loss is the
output SINR divided by the element-space optimum. The \emph{worst carrier} is the carrier at which
the mean loaded-SMI loss over the 500 dwells is lowest. The interval stored with each mean has a half-width of at most 0.04~dB for the fractional and nearest-bin rules and at most 0.05~dB for the frozen window. The testbench was built in Simulink from library blocks, and every environment model in it was validated against a separate implementation.

\subsection{Results}
Table~\ref{tab:e1} gives the worst-carrier loss for the three rules. The frozen window is not
viable away from broadside. At $u_s=0.55$ it loses 39 to 60~dB at the worst carrier, and the
clairvoyant subspace loss exceeds 3~dB at 95, 66 and 41\% of the carriers for 9, 25 and 49
beams; at $u_s=0.85$ the shares are 76, 61 and 46\%. This is the geometry of
Figure~\ref{fig:drift}: once the look direction leaves the window, the processor sees it only
through sidelobes. Near broadside the drift is small (1.6 bins) and a frozen $5\times5$ or
$7\times7$ window is within 0.15~dB of the others.

\begin{table}[!t]
\caption{Worst-carrier loaded-SMI loss (\textnormal{dB}), $K=2B$, three 30~\textnormal{dB} jammers. Gain is fractional minus nearest-bin, so a positive gain favours the exact window. The last column is the share of the 41 carriers at which the clairvoyant subspace loss of the frozen window exceeds 3~\textnormal{dB}.}
\label{tab:e1}
\centering
\footnotesize
\setlength{\tabcolsep}{3pt}
\begin{tabular}{@{}ccrrrrr@{}}
\toprule
$u_s$ & $B$ & Fractional & Nearest & Frozen & Gain & Frozen $>3$~dB \\
\midrule
0.20 & 9  & $-4.13$ & $-3.52$ & $-5.13$  & $-0.61$ & 0.07 \\
0.20 & 25 & $-1.87$ & $-1.80$ & $-1.82$  & $-0.07$ & 0.00 \\
0.20 & 49 & $-1.07$ & $-1.19$ & $-1.22$  & $+0.12$ & 0.00 \\
\midrule
0.55 & 9  & $-2.49$ & $-3.14$ & $-59.55$ & $+0.65$ & 0.95 \\
0.55 & 25 & $-1.01$ & $-1.37$ & $-49.47$ & $+0.36$ & 0.66 \\
0.55 & 49 & $-0.63$ & $-0.94$ & $-38.71$ & $+0.31$ & 0.41 \\
\midrule
0.85 & 9  & $-2.20$ & $-2.47$ & $-52.12$ & $+0.27$ & 0.76 \\
0.85 & 25 & $-1.21$ & $-1.31$ & $-42.43$ & $+0.10$ & 0.61 \\
0.85 & 49 & $-0.81$ & $-0.95$ & $-32.75$ & $+0.14$ & 0.46 \\
\bottomrule
\end{tabular}
\end{table}

The exact window beats nearest-bin placement in seven of the nine cases, by 0.10 to 0.65~dB,
most for the smallest window and the look direction farthest from broadside: 0.65, 0.36 and
0.31~dB at $u_s=0.55$ for 9, 25 and 49 beams. It does not win everywhere. At $u_s=0.20$ the
nine-beam fractional window is 0.61~dB worse at its worst carrier than the nearest-bin window,
and the 25-beam window is 0.07~dB worse, a small difference that is nevertheless larger than the combined half-width of the stored intervals (about 0.03~dB). At that setting
the fractional window has a subspace loss beyond 3~dB at two of the 41 carriers (share 0.05),
while the nearest-bin window has none. Exact capture of the look vector is therefore the
property that the fractional rule guarantees; it is not by itself the property that minimises
the loss when the window is small and a jammer lies near it, because a window shifted by up to half a bin may also hold more of that jammer's leakage.

Figure~\ref{fig:loss} shows the clairvoyant subspace loss against carrier for the 25-beam window at $u_s=0.55$ and the nine-beam window at $u_s=0.20$.

\begin{figure}[!t]
\centering
\includegraphics[width=\columnwidth]{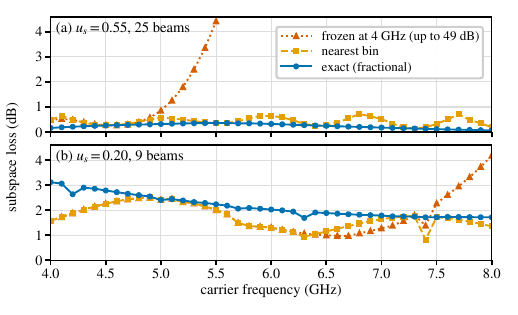}
\caption{Clairvoyant subspace loss against carrier for the three placement rules and the three jammers of Table~\ref{tab:e1}. (a) 25 beams at $u_s=0.55$: the exact window peaks at 0.36~\textnormal{dB} and the nearest-bin window at 0.72~\textnormal{dB}, while the frozen window exceeds 3~\textnormal{dB} from 5.4~\textnormal{GHz} and reaches 49~\textnormal{dB} at 8~\textnormal{GHz}. (b) Nine beams at $u_s=0.20$: the exact window has a larger loss than the nearest-bin window at 38 of the 41 carriers.}
\label{fig:loss}
\end{figure}

\subsection{Random laydowns and jammer-aware placement}
The laydown of Table~\ref{tab:e1} is one draw, so the ranking of the rules is also checked on random ones. The array, the 41 carriers and the 30~dB jammer level are those of Section~IV-A. For each look direction, 200 laydowns of three jammers are drawn with directions uniform in a disc of radius 0.5 around the look direction in $(u,v)$, at least 0.06 from it (one bin at 8~GHz, half a bin at 4~GHz), so that no jammer coincides with the look direction, where the optimum collapses, and inside $u^2+v^2\le0.95^2$. Every laydown is scored at all 41 carriers, which gives 8200 cases per row, with the true covariance, so that sampling noise does not enter. The script behind Table~\ref{tab:lay} reproduces the 27 stored worst-carrier clairvoyant losses of the simulation behind Table~\ref{tab:e1} to within 0.001~dB, which supports using it away from the stored laydown.

The exact window has the smaller loss than the nearest-bin window in 70 to 95\% of the cases (Table~\ref{tab:lay}), and the median loss of the exact 25- and 49-beam windows is 0.15~dB or less. The lead is smallest for nine beams (70 to 75\%), which is consistent with the reversal for nine beams near broadside in Table~\ref{tab:e1}. The frozen window has a loss above 3~dB in 32 to 80\% of the cases at $u_s=0.55$ and 0.85, and in at most 2\% at $u_s=0.20$.

Section~V raises the possibility of choosing the window centre from the jammer directions. To measure what it could gain, each case in a subset (60 laydowns, 11 carriers, 660 cases per row) is also scored at 81 centres within one bin of the look bin, on a grid of 0.25 bin that includes the exact centre. This is an oracle benchmark: it uses the true covariance, and no implementable rule is given here. The best of them improves on the exact window by a median of 0.45 to 0.65~dB for nine beams, 0.06~dB for 25 beams and 0.01~dB for 49 beams. The 90th percentile is 1.3, 0.44 and 0.19~dB, and the largest gain is 4.2~dB, for nine beams. For 25 beams or more, centring on the look direction is therefore within 0.44~dB of the best centre in nine cases out of ten.

\begin{table}[!t]
\caption{Clairvoyant subspace loss over random three-jammer laydowns (8200 cases per row). Median loss (\textnormal{dB}, positive); share of cases (\%) in which the exact window beats the nearest-bin window and in which the frozen window loses more than 3~\textnormal{dB}; gain (\textnormal{dB}) of the best centre within one bin of the look bin over the exact window (660 cases per row).}
\label{tab:lay}
\centering
\footnotesize
\setlength{\tabcolsep}{2.2pt}
\begin{tabular}{@{}ccccccccc@{}}
\toprule
 & & \multicolumn{3}{c}{Median loss} & \multicolumn{2}{c}{Share (\%)} & \multicolumn{2}{c}{Gain} \\
\cmidrule(lr){3-5}\cmidrule(lr){6-7}\cmidrule(l){8-9}
$u_s$ & $B$ & Exact & Nearest & Frozen & \shortstack{Exact\\better} & \shortstack{Frozen\\$>3$} & Median & \shortstack{90th\\pct} \\
\midrule
0.20 & 9 & 0.86 & 0.95 & 1.05 & 70 & 2 & 0.52 & 1.15 \\
 & 25 & 0.11 & 0.30 & 0.31 & 85 & 0 & 0.06 & 0.43 \\
 & 49 & 0.02 & 0.15 & 0.15 & 91 & 0 & 0.01 & 0.19 \\
\midrule
0.55 & 9 & 0.81 & 0.98 & 18.20 & 75 & 80 & 0.45 & 1.16 \\
 & 25 & 0.11 & 0.35 & 5.89 & 90 & 55 & 0.06 & 0.40 \\
 & 49 & 0.01 & 0.20 & 0.41 & 95 & 32 & 0.01 & 0.17 \\
\midrule
0.85 & 9 & 1.05 & 1.14 & 19.98 & 71 & 77 & 0.65 & 1.31 \\
 & 25 & 0.15 & 0.37 & 14.30 & 87 & 61 & 0.06 & 0.44 \\
 & 49 & 0.02 & 0.19 & 0.68 & 93 & 47 & 0.01 & 0.18 \\
\bottomrule
\end{tabular}
\end{table}

\section{Discussion}

Three conclusions follow, with their limits. First, a hopping radar cannot freeze its beamspace
window unless its look direction stays close to broadside and the window is large compared with
the drift. Second, nearest-bin placement is a small penalty, 0.1 to 0.65~dB in these
cases and up to 1.34~dB of look capture in the worst case, and the exact rule removes it for the
price of a phase ramp that costs about a fifth of the FFT; zero padding all but removes the look-capture loss too, but changes the processor and costs more. Third, exact placement is not always the
loss-minimising placement: the placement that minimises the subspace loss depends on where the
jammers are, which suggests choosing the window centre from the interference geometry as well
as from the look direction. Section~IV-C measures what that could gain: for 25 beams or more it is 0.44~dB or less in nine cases out of ten, so the look-centred exact rule is close to the best centre, and only the nine-beam window gains much, up to 4.2~dB in the worst of 660 cases. That is an oracle figure. A rule that chose the centre from the training data would make the window data-dependent, which none of the results here covers.

The sampled-covariance evidence (Table~\ref{tab:e1}) comes from one jamming laydown, and the random-laydown evidence (Table~\ref{tab:lay}) uses the true covariance, three jammers of equal power within 0.5 of the look direction, and one drawing scheme. Both are for three look directions with $v_s=0$, one array size, one loading level, an ideal array, and a square window. Land
clutter, detection, array errors, multipath and the comparison with a zero-padded grid in
simulation, as opposed to the look-capture analysis of Section~III, are outside it. The stored intervals of the two compared rules are at most 0.04~dB per estimate, so the smallest differences reported (0.07 and 0.10~dB) exceed them, but they are too small to matter in practice.

\section{Conclusion}

Because the beamspace transform is orthonormal, a hop affects a windowed beamspace processor
only through the position of the window. Placing the window's centre at the fractional bin where the look direction falls keeps the look vector fully captured at every carrier, for the cost of one phase ramp. A window frozen at one carrier fails for any look direction far from broadside, with losses of 33 to 60~dB at the worst carrier for $u_s=0.55$ and 0.85, and rounding to the nearest bin
costs up to 1.34~dB of look capture and 0.1 to 0.65~dB of loss against three jammers, except for small windows near broadside, where it did slightly better. Over random jammer laydowns the exact window had the smaller clairvoyant loss in 70 to 95\% of the cases, and moving the window by up to a bin with oracle knowledge of the jammers gained a median of at most 0.65~dB for nine beams and 0.06~dB for 25 beams or more.

\section*{Acknowledgment}
The authors used Claude (Anthropic), a large language model, to help prepare this version of the paper from their earlier draft. The authors provided the results and the derivations, Claude helped to format the paper for the conference template and to proof-read it. The authors reviewed and accepted all changes, approved the final text, and take full responsibility for the content.

\end{document}